\newtheorem{defn}{Definition}
\newtheorem{thm}{Lemma}
\newcommand{\argmax}{\operatornamewithlimits{argmax}}
\def\BibTeX{{\rm B\kern-.05em{\sc i\kern-.025em b}\kern-.08em
    T\kern-.1667em\lower.7ex\hbox{E}\kern-.125emX}}
\begin{document}

\doi{DOI}

\title{Fast Channel Estimation and Beam Tracking for Millimeter Wave Vehicular Communications}

\author{\uppercase{sina shaham}\authorrefmark{1}, \uppercase{Ming Ding}\authorrefmark{2}, \uppercase{Matthew Kokshoorn}\authorrefmark{3}, \uppercase{Zihuai Lin}\authorrefmark{4},
\uppercase{Shuping Dang\authorrefmark{5}, and Rana Abbas}.\authorrefmark{6}}

\address[1]{Department of Engineering, The University of Sydney, Sydney, NSW, 2006 Australia}
\address[2]{Data61, Sydney, NSW, 1435 Australia}
\address[3]{Department of Engineering, The University of Sydney, Sydney, NSW, 2006 Australia}
\address[4]{Department of Engineering, The University of Sydney, Sydney, NSW, 2006 Australia}
\address[5]{Computer, Electrical and Mathematical Sciences and Engineering Division, King Abdullah University of Science and Technology, Thuwal 23955-6900, Saudi Arabia}
\address[6]{Department of Engineering, The University of Sydney, Sydney, NSW, 2006 Australia}

\markboth
{S. Shaham \headeretal: Fast Channel Estimation and Beam Tracking for Millimeter Wave Vehicular Communications}
{S. Shaham \headeretal: Fast Channel Estimation and Beam Tracking for Millimeter Wave Vehicular Communications}

\begin{abstract}
Millimeter wave (mmWave) has been claimed to be the only viable solution for high-bandwidth vehicular communications. However, frequent channel estimation and beamforming required to provide a satisfactory quality of service limits mmWave for vehicular communications. In this paper, we propose a novel channel estimation and beam tracking framework for mmWave communications in a vehicular network setting. For channel estimation, we propose an algorithm termed robust adaptive multi-feedback (RAF) that achieves comparable estimation performance as existing channel estimation algorithms, with a significantly smaller number of feedback bits. We derive upper and lower bounds on the probability of estimation error (PEE) of the RAF algorithm, given a number of channel estimations, whose accuracy is verified through Monte Carlo simulations. For beam tracking, we propose a new practical model for mmWave vehicular communications. In contrast to the prior works, the model is based on position, velocity, and channel coefficient, which allows a significant improvement of the tracking performance. Focused on the new beam tracking model, we re-derive the equations for Jacobian matrices, reducing the complexity for vehicular communications. An extensive number of simulations is conducted to show the superiority of our proposed channel estimation method and beam tracking algorithm in comparison with the existing algorithms and models. Our simulations suggest that the RAF algorithm can achieve the desired PEE, while on average, reducing the feedback overhead by $75.5\%$ and the total channel estimation time by $14\%$. The beam tracking algorithm is also shown to significantly improve beam tracking performance, allowing more room for data transmission.
\end{abstract}
\begin{IEEEkeywords}
Beamforming, beam tracking, channel estimation, millimeter wave,\underline multiple-input multiple-output (MIMO).
\end{IEEEkeywords}
\titlepgskip=-15pt

\maketitle

\section{Introduction}\label{Introduction}

\PARstart{M}{illimeter wave}\footnote{This work was submitted in part and accepted to appear in the proceedings of IEEE Wireless Communications and Networking Conference, 2018~\cite{shaham2018raf}.}
 wireless communications is one of the primary candidates proposed to cater for the high data traffic demand of 5G mobile network \cite{ref1}, \cite{ref2}. The mmWave spectrum is considered to be from 30 to 300 GHz, which enables high-rate data transmission. Current research in mmWave is mostly focused on the 28 GHz, 38 GHz, and 60 GHz bands as well as the E-band, consisting of 71-76 GHz and 81-86 GHz \cite{sss1}. Several standards have already been established to regulate the use of mmWave, such as ECMA-387\cite{s2}, IEEE 802.15.3.c\cite{s3}, and more importantly, IEEE 802.11ad\cite{s4}, which is the first standard in the IEEE 802.11 family to support a mmWave band, i.e., 60 GHz band.

Exploiting the high data rate of mmWave paves the way for a number of exciting applications, such as mmWave cellular systems, vehicle to vehicle (V2V) communications, and vehicle to infrastructure (V2I) communications. Conventional protocols for vehicular communications fail at providing the high data rate required for many of its applications, e.g., high-resolution map downloads for navigation, collection and distribution of aggregated sensor information from/to vehicles for improved safety, clouding computing of the transmitted data from vehicles, etc. For instance, dedicated short range communications (DSRC) provides only 2-6 Mb/s for a range of 1000 m, and cellular communications offer at most 100 Mb/s in high-mobility scenarios. On the other hand, the existence of line of sight (LoS) paths in vehicular communications with high probability makes this high-bandwidth technology more suitable. That is because the height of a base station (BS) is usually much higher than that of vehicles with embedded transceivers mounted on top. Moreover, the limited communication range provides an inherent security feature.

It is worth noting that the use of mmWave for vehicular communications is not a new concept \cite{r4}. However, it is only in recent years that the advancements in CMOS technologies used in radio frequency integrated circuits have made the concept practical. Nonetheless, the applications of mmWave for vehicular communications still faces a number of open challenges, e.g., high path loss, limited communication range, beam training and alignment overhead due to the high mobility, etc. In this paper, we focus on the latter challenge. To compensate for the high path loss, mmWave is heavily dependent on establishing directional links with high beamforming gains. This requires frequent and accurate channel estimation and tracking reports. Moreover, having large antenna arrays increases the complexity of channel estimation as well as the number of required feedback bits.  On the other hand, the high mobility feature of vehicles leads to a fast changing environment which increases the frequency of channel estimation even further. Therefore, significantly faster, more reliable and more robust techniques are required to allow for sufficiently reliable and efficient data transfer between transmitters and receivers in vehicular communications, compared to conventional and stationary applications.

In the rest of this section, we review the current state of art approaches, and then, we briefly present our contributions in this paper.

\subsection{Related Work}

In general, channel estimation in mmWave is focused on finding three parameters: the angle of arrival (AoA), the angle of departure (AoD) and the channel coefficient ($\alpha$). Recent measurements have demonstrated a sparse nature of mmWave communication channels\cite{ref6}. Exploiting the sparsity, several works such as \cite{c2} and \cite{RH} have shown the efficiency of compressive sensing methods in decreasing the training overhead required for channel estimation. Authors in \cite{c1} proposed a hierarchical multi-resolution beamforming codebook to estimate the channel. In \cite{RH}, the authors developed a multi-stage adaptive channel estimation algorithm. In each stage, the possible AoA and AoD are divided into two subspaces ($K = 2$), and the most likely subspaces are chosen for further refinement in the next stage. The channel coefficient is estimated after the best link has been found. However, compressive sensing is well known to be non-adaptive technology. In \cite{overlap} authors followed the same approach with $K = 3$ to improve performance while maintaining low complexity and speed through using overlapped beam-patterns. One major challenge is that if in any of the stages the estimated angle is incorrect, the estimation in the following stages will also be incorrect due to the error propagation effect. However, if we have an insight into the probability of estimation error (PEE) associated with each measurement, we can terminate channel estimation when the PEE is below a predetermined threshold, as with rate adaptive algorithm RACE developed in \cite{RACE}. Unfortunately, RACE requires a large number of feedback bits, particularly in the low signal-to-noise ratio (SNR) regime.

After channel estimation, to prolong the duration of communication between the transmitter and receiver, fast beam tracking methods are required. This is practical for mmWave vehicular communications where vehicles are likely to move at uniform speeds for sufficiently long periods of time. Authors in \cite{midc} proposed a beamforming protocol for 60-GHz propagation channels. The method exploited training sequences to detect signal strengths. The evaluation of the proposed algorithm was provided in \cite{midc2}. The approach required multiple beam training sequences. In \cite{zhang}, the focus of the paper is on tracking the beams obtained by a full scan of all possible beam directions. The proposed algorithm applies the EKF to track paths. This method required a high overhead of pilot transmission to attain the measurement matrix. Moreover, the state model is based on angles only, without considering given to the channel coefficient. In \cite{rr}, the authors improved the tracking by having a single measurement instead of the full scan, which reduces the overall overhead. However, as in \cite{zhang}, the change in angles was modeled as a Gaussian noise with zero mean, which we will show later is not valid for vehicular communications.

Garcia et al.~\cite{garcia2016location} investigated the challenges associated with the mmWave communications in the vehicular domain. The authors proposed a location-aided beamforming strategy and analyzed the resulting performance in terms of antenna gain and latency. The outcome of experiments indicated the significance of location information for the channel estimation and beam tracking. Vutha et al.~\cite{va2017inverse} considered beam alignment in mmWave communications of vehicular settings. This paper proposed to use the vehicle’s position (e.g., available via GPS) to query a multipath
fingerprint database, which provides prior knowledge of potential pointing directions for reliable beam alignment. The approach is the inverse of fingerprinting localization, where the measured multipath signature is compared to the fingerprint database to retrieve the most likely position. Xinyu et al.~\cite{gao2016fast} focused on beam selection in terahertz (THz) massive MIMO systems. The authors proposed to utilize the obtained beamspace channels in the previous time slots to predict the prior
information of the beamspace channel in the following time slot without channel estimation.

\subsection{Main Contributions}
This paper aims to propose a novel approach for channel estimation and beam tracking in V2I mmWave communication systems to maximize the communication time of receiver (RX) and transmitter (TX). Our main contributions are summarized as follows.

\begin{itemize}
  \item We propose a new multi-stage adaptive algorithm referred to as robust adaptive multi-feedback (RAF) for mmWave channel estimation. The main advantage of the proposed algorithm is its low feedback overhead. Then, we derive the closed-form expression for the minimum number of feedback bits required for channel estimation, under maximum likelihood decoding, for a given probability of estimation error (PEE) constraint. We show that the estimation performance yielded by the RAF algorithm performs close to that bound, at a significantly reduced complexity.
  \item We derive upper and lower bounds on the PEE for the proposed RAF algorithm. The accuracy of these bounds is verified via Monte Carlo simulations.
  \item We show that the existing model used for beam tracking in mmWave using EKF recursion is not suitable for vehicular communications. Accordingly, we propose new evolution and observation models for beam tracking using EKF recursion, with the derivation of closed-form expressions for Jacobian matrices. The derived expressions are shown to be of less complexity, especially in the calculation of Jacobian matrices.\\
\end{itemize}

\textit{Notation} : Capital bold-face letter ($\boldsymbol{A}$) is used to denote a matrix, $\boldsymbol{a}$ to denote a vector, ${a}$ to denote a scalar and $\mathcal{A}$ denotes a set. $||\boldsymbol{A}||^2$ is the magnitude of $\boldsymbol{A}$, $|a|$ is the absolute value of $a$, and determinant is shown by $\text{det}(\boldsymbol{A})$. $\boldsymbol{A}^T$, $\boldsymbol{A}^H$ and $\boldsymbol{A}^*$ are the transpose, conjugate transpose and conjugate of $\boldsymbol{A}$, respectively. For a square matrix $\boldsymbol{A}$, $\boldsymbol{A}^{-1}$ represents its inverse matrix. $\boldsymbol{I}_N$ is the $N\times N$ identity matrix and $\lceil \cdot \rceil$ denotes the ceiling function. The superscripts $(.)^\textrm{R}$, $(.)^\textrm{I}$ return real and imaginary parts of the complex number enclosed, respectively. $\mathcal{C}\mathcal{N}(\boldsymbol{m},\boldsymbol{R})$ is a complex Gaussian random vector with mean $\boldsymbol{m}$ and covariance matrix $\boldsymbol{R}$, and $\text{E}[\boldsymbol{a}]$ and $\text{Cov}[\boldsymbol{a}]$ denote the expected value and covariance of ${\boldsymbol{a}}$, respectively.

\section{System model}\label{System model}
\subsection{mmWave V2I Communication System}
\subsubsection{Overview}
We consider a BS that is installed on a cellular tower or building with a height of $h$, as depicted in Fig. \ref{f1}. At any transmission block $j$, the position of the vehicle is represented by $d_j$ (point $B$), where $d_j$ is equal to the distance between the vehicle and the perpendicular line connecting the antenna array to the ground. Moreover, the speed of the vehicle is denoted by $v_j$, and its RX angle is denoted by $\theta_j$. In the following transmission block ($j+1$), the vehicle would have, thus, moved to position $d_{j+1}$ (point $C$). Similarly, its speed and receiving angle are now $v_{j+1}$ and $\theta_{j+1}$, respectively. The corresponding TX angles at BS are denoted by $\phi_j$ and $\phi_{j+1}$. The TX and RX angles are chosen to be the angles between the positive $x$-axis and the line connecting the receiver to the transmitter ($BA$). Furthermore, the receive antenna array is mounted on the roof-top of the vehicle which results in a dominant LoS path between the transceivers.

\begin{figure}[t!]
\centering
\includegraphics[scale=.58]{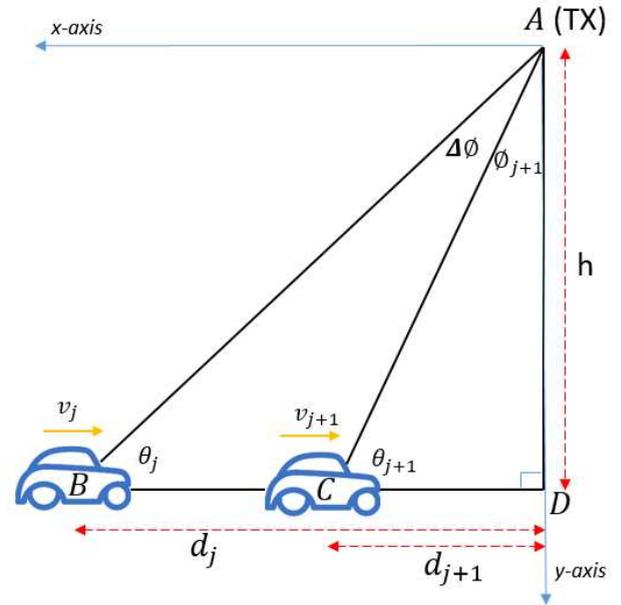}
\caption{Schematic Diagram of V2I model considered in this paper.}
\label{f1}
\end{figure}

\subsubsection{Transmission Scheme}

The communication protocol between the vehicle and the BS is illustrated in Fig. \ref{TB}. A beacon interval is defined as the maximum time period before a new channel estimation is required. We assume that the beacon interval is made up of $(m+1)$ discrete transmission blocks ($m=0,...,M$), with an equal duration $\Delta t$. Initial channel estimation takes place at the beginning of the first block ($m=0$), followed by channel tracking in the rest of the transmission blocks of the beacon ($m=1,...,M$).

Channel tracking is necessary as channel estimation requires a higher overhead of pilots and feedback bits, which leads to a shorter duration for data transfer. Thus, at the beginning of transmission blocks $m=1,...,M$, a single pilot with a duration of the one time slot is transmitted to track the estimated channel. Based on the literature, we assume that the time dedicated to channel estimation and tracking is negligible in comparison with that of data transfer. Thus, the channel is considered to be static during this time period \cite{overlap,RACE}.

\begin{figure}[t!]
\centering
\includegraphics[scale=.6]{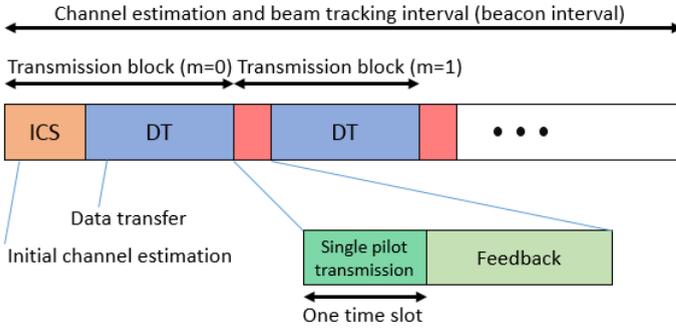}
\caption{Transmission scheme used for communications between the TX and RX.}
\label{TB}
\end{figure}

\subsubsection{Structure of Beamformers}
For the proposed framework, we focus on analog beamforming. However, to further improve the performance, it can be incorporated as part of hybrid beamforming. As proposed by \cite{n11,n12,n13}, in the hybrid structure, beamforming is divided into a digital precoder followed by an analog precoder. The design of the digital precoder for the specifications and codebooks used in our proposed algorithms can be found in \cite{RH}. Therefore, we focus on the analog structure shown in Fig. \ref{TB2}. The proposed framework is explained for a single user for simplicity. Hence, we assume a single RF chain at each node. The mmWave communication system is considered to have $N_\textrm{r}$ antennas at the receiver and $N_\textrm{t}$ antennas at the transmitter (TX) without loss of generality.

\begin{figure}[t!]
\centering
\includegraphics[width=3.5in]{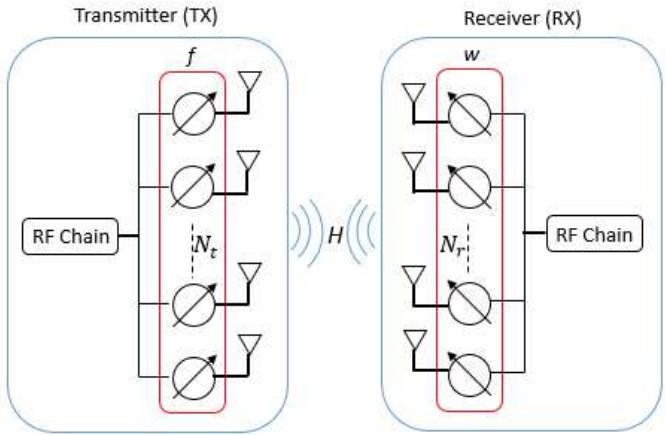}
\caption{Structure of beamformers.}
\label{TB2}
\end{figure}

\subsection{Channel Acquisition Model}
The transmitted pilots are assumed to have unit power and occupy one time slot. If pilot $x$ is transmitted using TX beamformer $\boldsymbol{f}$ ($\|\boldsymbol{f}\|^2 = 1$) and power $P$, the transmitted signal can be expressed as
\begin{equation} \label{n4}
\boldsymbol{s} = \sqrt{P}\boldsymbol{f}x.
\end{equation}
The signal observed by the receiver can be expressed as
\begin{equation} \label{n5}
\boldsymbol{r}_m = \sqrt{P}\boldsymbol{H}_m\boldsymbol{f}x + \boldsymbol{n}_m,
\end{equation}
where the subscript $m$ represents the transmission block number and $\boldsymbol{n}_m$ is an additive white Gaussian noise ($\boldsymbol{n}_m\thicksim \mathcal{C}\mathcal{N}(\boldsymbol{0},N_0\boldsymbol{I})$) imposed on the received signal.
Furthermore, if the combining vector $\boldsymbol{w}$ is applied to received signal $\boldsymbol{r}$, the processed received signal can be written as
\begin{align}
y_m &= \sqrt{P}\boldsymbol{w}^H\boldsymbol{H}_m\boldsymbol{f}x + \boldsymbol{w}^H\boldsymbol{n}_m\nonumber\\
&= \sqrt{P}\boldsymbol{w}^H\boldsymbol{H}_m\boldsymbol{f}x + n_m. \label{n7}
\end{align}
Since $\|\boldsymbol{w}\|^2 = 1$, $n_m$ follows the same distribution as the elements of the vector $\boldsymbol{n}_m$ ($n_m\thicksim \mathcal{C}\mathcal{N}(\boldsymbol{0},N_0)$).

The AoA and AoD of a single path in the $m$-th transmission block are denoted by $\theta_m$ and $\phi_m$, respectively. Assuming uniform linear array (ULA) at both ends of the transmission, the receive and transmit array response vectors are given by
\begin{align}\label{n1}
\setstretch{.9}
\boldsymbol{a}_\textrm{r}(\theta_m) &= \frac{1}{\sqrt{N_\textrm{r}}}[1,e^{-j\frac{2\pi}{\lambda}d\cos\theta_m},...,e^{-j(N_\textrm{r}-1)\frac{2\pi}{\lambda}d\cos\theta_m}]^T\\
\boldsymbol{a}_\textrm{t}(\phi_m) &= \frac{1}{\sqrt{N_\textrm{t}}}[1,e^{-j\frac{2\pi}{\lambda}d\cos\phi_m},...,e^{-j(N_\textrm{t}-1)\frac{2\pi}{\lambda}d\cos\phi_m}]^T,\label{m1}
\end{align}
with $d$ and $\lambda$ denoting the antenna spacing and the carrier wavelength, respectively. For simplicity, we consider a two-dimensional model, and hence, only azimuth angles are considered. In practice, ULA can be steered using phase shifters via a progressive phase shift \cite{r6}.

The channel between the RX and TX is denoted by an $N_\textrm{r} \times N_\textrm{t}$ matrix, i.e., $\boldsymbol{H}_m$. This matrix can be written mathematically as~\cite{RH,rh_track,c2}
\begin{equation} \label{n8}
\boldsymbol{H}_m = \sum\limits_{l=1}^{L}\alpha_m^{(l)}\boldsymbol{a}_\textrm{r}(\theta_{m}^{(l)})\boldsymbol{a}_t(\phi_m^{(l)})^H,
\end{equation}
where the index $l$ implies the $l$-th path and $\alpha_m^{(l)}$ represents the complex path gain of the path. Scattering in mmWave induces more than 20 dB attenuation \cite{r11}. Hence, we consider the LoS component as the target path for the channel estimation and tracking of the vehicle. This is a reasonable assumption as a LoS component has been shown to be the only component, in most cases, that can provide the required reliability for the high transmission rate in mmWave communications \cite{r11,r12}. Moreover, recent measurements have shown mmWave communication channels to be sparse in the geometric domain~\cite{p1}. Hence, paths are less likely to overlap, and we can assume that only one path lies within the main beam direction~\cite{rh_track}, and the other paths fall into side lobes. Therefore, we consider non-LoS paths to be negligible compared to the more dominant LoS component. This assumption becomes more accurate as the number of antennas increases and the beam width grows narrower. Based on this assumption, the observed signal from (\ref{n7}) can be written as
\begin{equation} \label{n9}
\begin{split}
y_m &= \sqrt{P}\alpha_m\boldsymbol{w}^H\boldsymbol{a}_\textrm{r}(\theta_m)\boldsymbol{a}_\textrm{t}(\phi_m)^H\boldsymbol{f}x + n_m.\\
\end{split}
\end{equation}

As our description of the framework focus on one beacon interval, and the channel estimation occurs only once in a beacon at $m=0$, we can simplify the notation by dropping the subscript zero for the channel estimation. Then, the observed pilot can be expressed as
\begin{equation} \label{nn9}
\begin{split}
y &= \sqrt{P}\alpha\boldsymbol{w}^H\boldsymbol{a}_\textrm{r}(\theta)\boldsymbol{a}_\textrm{t}(\phi)^H\boldsymbol{f}x + n.\\
\end{split}
\end{equation}
Starting from $m=1$, transceivers will have the estimated AoA and AoD. Therefore, the pointing direction of the beamformers is adjusted to these angles. Denoting pointing directions of the receiver combiner and the transmit beamformer at the $m$-th transmission block by $\overline{\theta}_m$ and $\overline{\phi}_m$, respectively, the directed beamformers can be expressed as
 \begin{equation} \label{n2}
\boldsymbol{w}(\overline{\theta}_m) = \frac{1}{\sqrt{N_\textrm{r}}}[1,e^{-j\frac{2\pi}{\lambda}d\cos\overline{\theta}_m},...,e^{-j(N_\textrm{r}-1)\frac{2\pi}{\lambda}d\cos\overline{\theta}_m}]^T
\end{equation}
 \begin{equation} \label{n3}
\boldsymbol{f}(\overline{\phi}_m) = \frac{1}{\sqrt{N_\textrm{t}}}[1,e^{-j\frac{2\pi}{\lambda}d\cos\overline{\phi}_m},...,e^{-j(N_\textrm{t}-1)\frac{2\pi}{\lambda}d\cos\overline{\phi}_m}]^T.
\end{equation}
Subsequently, the observed signal can be expressed as

\begin{equation} \label{nn10}
\begin{split}
y_m &= \sqrt{P}\alpha_m\boldsymbol{w}(\overline{\theta}_m)^H\boldsymbol{a}_\textrm{r}(\theta_m)\boldsymbol{a}_\textrm{t}(\phi_m)^H\boldsymbol{f}(\overline{\phi}_m)x + n_m.\\
\end{split}
\end{equation}

\section{Proposed Channel Estimation Algorithm}\label{Initial channel estimation}

In this section, we start by reviewing an extended version of the binary search approach for channel estimation presented in \cite{RH}. Then, we develop a sparse representation of the system, based on maximum likelihood detection (MLD) of the AoA and AoD, and derive a closed-form expression for the optimal number of feedback bits required, subject to PEE constraints. Finally, we explain the proposed RAF algorithm and derive upper and lower bounds on the achievable PEE.

\subsection{Multi-Stage Channel Estimation without PEE Constraints}
Extending the multi-stage approach in\cite{RH}, we divide the whole space for AoAs and AoDs into $K$ subspaces creating $K^2$ possible combinations in each stage. The target path that we wish to estimate is located in one of the possible TX-RX sub-spaces. After estimating the most probable TX-RX subspace pair, they are further divided into another $K$ subspaces, followed by another round of estimation. The process continues until AoD and AoA reach the specified resolution. In the s-th stage, the beamforming vectors at the TX and RX for the $k$-th subspace are represented by $\boldsymbol{f}^s_k$ and $\boldsymbol{w}^s_k$.

To estimate the most likely TX-RX subspaces, a pilot signal ($|x|^2 = 1$) is transmitted in each of the $K^2$ transmitter and receiver angle combinations. Each combination corresponds to one AoA subspace candidate at the receiver and one AoD subspace candidate at the transmitter. As a result, the system can be represented as
\begin{equation} \label{n13}
\boldsymbol{c}^{s,K^2} = \sqrt{P}x\boldsymbol{l}^{s,K^2}+\boldsymbol{n}^{s,K^2},
\end{equation}
where superscripts represent the stage number $s$ and the total number of measurements $K^2$. Also, $\boldsymbol{n}$ is a $K^2\times1$ vector whose elements are i.i.d. Gaussian random variables, and $\boldsymbol{l}^{s,K^2}$ is a vector containing channel responses to all combinations of the transmit and receive beamforming vectors that can be expressed as
\begin{equation} \label{n14}
\setstretch{.9}
\boldsymbol{l}^{s,K^2} =
\begin{bmatrix}
    (\boldsymbol{w}^s_1)^H\boldsymbol{H}\boldsymbol{f}^s_1  \\
    (\boldsymbol{w}^s_2)^H\boldsymbol{H}\boldsymbol{f}^s_1   \\
    \vdots  \\
    (\boldsymbol{w}^s_1)^H\boldsymbol{H}\boldsymbol{f}^s_2  \\
    (\boldsymbol{w}^s_2)^H\boldsymbol{H}\boldsymbol{f}^s_2   \\
    \vdots  \\

    (\boldsymbol{w}^s_K)^H\boldsymbol{H}\boldsymbol{f}^s_{K}
\end{bmatrix}.
\end{equation}
In order to find the desired beamforming vectors, we need the dictionary matrix of all possible steering vectors of angles, shown as
\begin{equation} \label{n15}
\boldsymbol{A}_{\textrm{DIC}} = [ \boldsymbol{a}(0), \boldsymbol{a}(\frac{2\pi}{N_\textrm{t}}),\dots, \boldsymbol{a}(\frac{2\pi(N_\textrm{t}-1)}{N_\textrm{t}})].
\end{equation}
The beamforming vector for the $k$-th subspace at the TX can be found by solving the following equation
\begin{equation} \label{n16}
\boldsymbol{A}_{\textrm{DIC}}^H\boldsymbol{f}_k^{s} = \boldsymbol{z}_i^{s,k},
\end{equation}
where $\boldsymbol{z}_i^{s,k}$ is an $N_\textrm{t}\times 1$ vector, in which values included in the intended transmit subspaces are equal to the constant $C_s$ and zero otherwise. This vector is mathematically defined as:

\begin{equation} \label{alaki}
\boldsymbol{z}_i^{s,k} = \Bigg\{   \begin{tabular}{cc}
  $C_s$ & if $\dfrac{i\pi}{N_t} \in$ $k$th subspace, $i\in\, 1,2,..,N_t-1$ \\
  0, & otherwise
  \end{tabular}
\end{equation}
The value of $C_s$ is chosen to normalize the magnitude of beamforming vectors to unity (i.e., $\|\boldsymbol{f}\|^2 = 1$). From (\ref{n16}), $\boldsymbol{f}_k^{s}$ is calculated as
\begin{equation} \label{n18}
 \boldsymbol{f}_k^{s} = (\boldsymbol{A}_{\textrm{DIC}}\boldsymbol{A}_{\textrm{DIC}}^H)^{-1}\boldsymbol{A}_{\textrm{DIC}}\boldsymbol{z}_i^{s,k}.
\end{equation}
The same procedure is used to find the beamforming vectors of the RX. After $K^2$ measurements, the RX will compare the magnitude of $K^2$ received pilots and choose the one with the largest magnitude. This TX-RX subspace pair would be the most likely pair to include the transmission path.

The channel estimation algorithm above does not offer any PEE guarantees. Therefore, we next extend this algorithm such that the error in each stage is below the specified threshold.
\subsection{A Sparse System Representation}
The explained multi-stage channel estimation algorithm in the previous section conducts $K^2$ measurements in every stage $s$. By substituting $\boldsymbol{H}$ into (\ref{n14}), we therefore have

\begin{equation} \label{n19}
\setstretch{.9}
\boldsymbol{l}^{s,K^2} =
  \alpha
\begin{bmatrix}
    (\boldsymbol{w}^s_1)^H\boldsymbol{a}_r(\theta) \boldsymbol{a}_t(\phi)^H \boldsymbol{f}^s_1  \\
    (\boldsymbol{w}^s_2)^H\boldsymbol{a}_r(\theta) \boldsymbol{a}_t(\phi)^H \boldsymbol{f}^s_1   \\
    \vdots  \\
    (\boldsymbol{w}^s_1)^H\boldsymbol{a}_r(\theta) \boldsymbol{a}_t(\phi)^H \boldsymbol{f}^s_2  \\
    (\boldsymbol{w}^s_2)^H\boldsymbol{a}_r(\theta) \boldsymbol{a}_t(\phi)^H \boldsymbol{f}^s_2   \\
    \vdots  \\
    (\boldsymbol{w}^s_K)^H\boldsymbol{a}_r(\theta) \boldsymbol{a}_t(\phi)^H \boldsymbol{f}^s_{K}
\end{bmatrix}.
\end{equation}
The multiplication of $(\boldsymbol{w}^s_1)^H\boldsymbol{a}_\textrm{r}(\theta)$ and $\boldsymbol{a}_\textrm{t}(\phi)^H \boldsymbol{f}^s_1$ is only non-zero if the AoA and AoD are aligned to the beamforming vectors \cite{RACE,RH,shaham2018raf}. Therefore, only one row of $\boldsymbol{l}^{s,K^2}$ is non-zero \cite{RH}. In this context, we now define a new matrix $\boldsymbol{G}^{s,q}$ that represents its initial state. It can be expressed as
\begin{equation} \label{n20}
 \boldsymbol{G}^{s,q}=  \boldsymbol{G}^{s,K^2}  =  \mathbf{I}_{2K \times 2K},
\end{equation}
where the index $q$ denotes the number of measurements conducted so far.

Finding the AoA and AoD is equivalent to finding a $K^2 \times 1$ vector $\boldsymbol{v}$ that is zero everywhere except the desired row of $\boldsymbol{G}^{s,q}$, in which it is equal to one. Hence, $\boldsymbol{l}^{s,q}$ and the observation vector $\boldsymbol{c}^{s,q}$ can be written as

\begin{equation} \label{n21}
 \boldsymbol{l}^{s,q} = \sqrt{P}xC_s^2 \alpha  \boldsymbol{G}^{s,q}\boldsymbol{v}^T
\end{equation}
\begin{equation} \label{n22}
\boldsymbol{c}^{s,q} = \sqrt{P}xC_s^2 \alpha  \boldsymbol{G}^{s,q}\boldsymbol{v}^T + \boldsymbol{n}^{q}.
\end{equation}
Assuming $d-$th element of $\boldsymbol{v}$ is equal to one, the estimated AoA subspace $\hat{k}_\textrm{t}$ and the AoD subspace $\hat{k}_\textrm{r}$ can be expressed as
\begin{equation} \label{n23}
\hat{k}_\textrm{t} = \lceil \dfrac{d}{K} \rceil ,\; \hat{k}_\textrm{r} = d - K(\hat{k}_\textrm{t} - 1).
\end{equation}
The new presentation of the channel estimation system indicates that the possible outcomes of the channel estimation are equivalent to the rows of matrix $\boldsymbol{G}^{s,q}$.

\subsection{Multi-Stage Channel Estimation with PEE Constraints by Maximum Likelihood Detection}
In our algorithm, the MLD method will be used for the estimation of AoA and AoD. After $q$ measurements, the distribution of the observation vector $\boldsymbol{c}^{s,q}$ can be written as\footnote{The detailed derivation of (\ref{n24}) can be found in \cite{overlap}}
\begin{equation} \label{n24}
    \boldsymbol{c}^{s,q} = \mathcal{C}\mathcal{N}(0, \boldsymbol{\Sigma_v} ),
\end{equation}
where
\begin{align} \label{n25}
\boldsymbol{\Sigma}_{v} &= PC_s^4\boldsymbol{G}^{s,q}\boldsymbol{v}\boldsymbol{v}^T(\boldsymbol{G}^{s,q})^H + N_0 \boldsymbol{I}_{q}.
\end{align}
It can be seen that the received vector follows circularly symmetric complex Gaussian (CSCG) distribution which has the probability density function of
\begin{align} \label{n26}
    f(\boldsymbol{c}^{s,q}&|\boldsymbol{v},\boldsymbol{G}^{s,q}) =\\ & \frac{1}{\pi^{q}\text{det}(\boldsymbol{\Sigma}_{v})} \text{exp}(-(\boldsymbol{c}^{s,q})^H \boldsymbol{\Sigma}^{-1}_{q} {\boldsymbol{c}^{s,q}}). \nonumber
\end{align}
In order to get a better understanding of the probability density, it is useful to inspect them in terms of probability. Defining the set $\mathcal{V}$ as all legitimate $K^2$ outcomes of the vector $\boldsymbol{v}$, the probability can be written as

\begin{align} \label{n27}
   p(\boldsymbol{v}|\boldsymbol{c}^{s,q})  =  \frac{ f(\boldsymbol{c}^{s,q}|\boldsymbol{v}) }{ \sum\limits_{\boldsymbol{j}\in {\mathcal{V}} }  f(\boldsymbol{c}^{s,q}|\boldsymbol{j})   }.
\end{align}
We are looking for the vector $\boldsymbol{j}$ that results in maximum probability by $ \argmax\limits_{\boldsymbol{j} \in {\mathcal{V}}}p(\boldsymbol{v}|\boldsymbol{c}^{s,q})$, which, as explained in the previous section, can be used to find the AoA and AoD. Upon completion of any stage $s$, the estimated channel coefficient can be expressed as

\begin{align}\label{n28}
\hat{\alpha} =  \frac{x (\boldsymbol{w}^s_{\hat{k}_\textrm{r}})^H\boldsymbol{H}\boldsymbol{f}^s_{\hat{k}_\textrm{t}}}{C^2_{s} }.
\end{align}

\begin{algorithm}[t]
\setstretch{.9}
\DontPrintSemicolon 
\textbf{Input:} $N_t$, $N_r$, $K$. \\
// Calculate:\\

$ \{\boldsymbol{f}^s_k\} \  \   $    $\forall k= 1,...,K$\\
$ \{\boldsymbol{w}^s_k\} \  \   $    $\forall k= 1,...,K$

\For {$s<S$} {

\For {$i=1\  to\  K$} {
\For {$j=1\  to\  K$} {
Transmitter transmits using $\boldsymbol{f}^s_i $\\
Receiver measures using $\boldsymbol{w}^s_j$
}
}

// After initial $K^2$ measurements\\
$q\leftarrow K^2$\\
$\boldsymbol{c}^{s,q} \leftarrow \sqrt{P}x\boldsymbol{l}^{s,q}+\boldsymbol{n}^{s,q}$\\
$\boldsymbol{d}\leftarrow\argmax\limits_{\boldsymbol{j} \in {\mathcal{V}}}p(\boldsymbol{v}|\boldsymbol{c}^{s,q})$\\
d$\leftarrow$ non-zero element of $\boldsymbol{d}$\\

$\hat{k}_\textrm{t} \leftarrow \lceil \dfrac{d}{K} \rceil ,\; \hat{k}_\textrm{r} \leftarrow d - K(\hat{k}_\textrm{t} - 1)$\\
\While{$p(\boldsymbol{v}|\boldsymbol{c}^{s,q})< (1 - \Gamma)$
and $q<q_{\textrm{max}}$}{
$q++$\\
Transmitter transmits using $\boldsymbol{f}^S_{\hat{k}_\textrm{t}}$\\
Receiver receives using $\boldsymbol{w}^S_{\hat{k}_\textrm{r}}$\\

// Update:\\
$\boldsymbol{d}\leftarrow\argmax\limits_{\boldsymbol{j} \in {\mathcal{V}}}p(\boldsymbol{v}|\boldsymbol{c}^{s,q})$\\
$d\leftarrow$ non-zero element of $\boldsymbol{d}$\\
$\hat{k}_\textrm{t} \leftarrow \lceil \dfrac{d}{K} \rceil ,\; \hat{k}_\textrm{r} \leftarrow d - K(\hat{k}_\textrm{t} - 1)$\\

}
}

\textbf{Output:}
$\hat{\alpha} =  \frac{x (\boldsymbol{w}^S_{\hat{k}_\textrm{r}})^H\boldsymbol{H}\boldsymbol{f}^S_{\hat{k}_\textrm{t}}}{C^2_{s} }$, $\hat{k}_\textrm{t}$, $\hat{k}_\textrm{r}$.\\
\caption{An algorithm that determines the optimal number of feedback bits.}
\label{SF}
\end{algorithm}

In order to have a benchmark to compare the RAF algorithm’s estimation performance, it is important to know what the optimal number of feedback bits is. This value needs to be large enough to ensure the desired PEE. In other words, we are looking for the minimum implementable number of feedback bits that guarantees the desired PEE. From information-theoretical perspective, the minimum number is one with a single feedback including  $\lceil\log_2(K)\rceil$ bits\cite{ref10}. We verify that this number is achievable by developing an algorithm which only needs $\lceil\log_2(K)\rceil$ bits of feedback. The cost of having the optimal number of feedback bits is a large number of channel measurements. Therefore, this algorithm is just used as a benchmark and can not be a realistic alternative in practice. Finding the AoA and AoD is equivalent to finding a $K^2 \times 1$ vector $\boldsymbol{v}$ that is zero everywhere except the desired row of $\boldsymbol{G}^{s,q}$, in which it is equal to one. Hence, $\boldsymbol{l}^{s,q}$ and the observation vector $\boldsymbol{c}^{s,q}$ can be written as

We denote $\Gamma$ as the probability of the event that a channel estimation is incorrect. The algorithm starts by having the initial $K^2$ measurements which result in the primary channel estimation. The TX continues to send pilots using the same sequence as the initial measurements. After each transmission, using MLD, the RX is capable of calculating $p(\boldsymbol{v}|\boldsymbol{c}^{s,q})$. As soon as reaching the desired PEE by $p(\boldsymbol{v}|\boldsymbol{c}^{s,q})> (1 - \Gamma)$, the RX will feedback $\lceil\log_2(K)\rceil$ bits to notify the TX about the estimated AoD. The process of adding a new measurement for the TX subspace of $\hat{k}_\textrm{t}$ and the RX subspace of $\hat{k}_\textrm{r}$ can be mathematically written as
\begin{align}\label{n29}
\setstretch{.9}
\boldsymbol{c}^{s,q+1} = \sqrt{P} x \left[\begin{array}{ccc} \boldsymbol{l}^{q} \\ (\boldsymbol{w}_{\hat{k}_\textrm{r}}^{s})^H  \boldsymbol{H} \boldsymbol{f}_{\hat{k}_\textrm{t}}^{s} \end{array}
\right] +\left[\begin{array}{ccc} \boldsymbol{n}^{q}  \\ (\boldsymbol{w}_{\hat{k}_\textrm{r}}^{s})^H  n   \end{array}\right],
\end{align}
Note that there is always a probability of outage when the channel power gain is close to zero. In order to prevent an excessive number of measurements, we set a maximum to the number of pilots that could be transmitted, denoted by $q_{\textrm{max}}$. The formal representation of the algorithm is given in Algorithm \ref{SF}.

\subsection{Robust Adaptive Multi-Feedback Algorithm}

Multi-stage channel estimation algorithms are mainly based on a fixed number of channel estimations. As an example, the authors in \cite{RH} used $K^2$ measurements in each stage to estimate the channel. Although the proposed algorithms are effective, they did not consider the performance in terms of the PEE. If due to the additive noise, the detection of the estimated AoA and AoD is incorrect in any of the stages, the algorithms will not be able to estimate the channel correctly. Therefore, devising an algorithm to ensure the desired PEE is crucial. The authors in\cite{RACE}, proposed a rate adaptive algorithm (RACE) in order to reach the desired PEE. Unfortunately, the algorithm requires a high number of channel feedback bits even for $K=2$, particularly at low SNR. Therefore, it is not practical to use the algorithm in fast changing environments, such as V2I scenarios. To this end, we propose an algorithm called RAF. In contrary to the existing algorithms, the RAF algorithm is based on exploiting the estimated channel coefficient to estimate the channel. The significance of using the channel coefficient is the entailed information about the number of measurements required. This helps to estimate the time to commence sending feedback bits and consequently requires a low number of feedback bits as well as pilot transmissions.

Before explaining the algorithm, we use the rudiments of information theory to find a lower bound on the number of measurements. The channel estimation is equivalent to finding a vector  $\boldsymbol{v}$ that contains $K^2$ binary bits encoded into $q$ (number of pilots transmitted) symbols. Therefore, the system has a transmission rate of $ \mathcal{C} = K^2/q$. According to the Shannon-Hartley theorem, we can easily derive the relation infra \cite{ref10}

\begin{align} \label{n30}
\setstretch{.9}
   \mathcal{C}  &= \dfrac{K^2}{q} \leq \log_2(1 + SNR_s)\nonumber \\
   &\rightarrow q \geq \dfrac{K^2}{log_2(1 + SNR_s)},
\end{align}
where $SNR_s$ (in stage s) can be written as
\begin{align} \label{n32}
	SNR_s = \dfrac{|\alpha|^2PK^{(2s-2)}}{N_0}.
\end{align}
Substituting (\ref{n32}) in (\ref{n30}), a lower bound can be found on the number of measurements that is required in each stage conditioned on the estimated value of $\alpha$. After $q$ measurements ($q \geq K^2$), if the mean of observations received in the estimated AoA and AoD is denoted by $\lambda^q$, the value of $\alpha$ can be estimated as

\begin{align}\label{n33}
\hat{\alpha} =  \frac{ \lambda^q}{\sqrt{P} C^2_{s} }.
\end{align}
Therefore, we have a lower bound on the number of measurements required.

\begin{algorithm}
\setstretch{.9}
\DontPrintSemicolon 
\textbf{Input:} $N_t$, $N_r$, $K$. \\

\textbf{Initialization:} .\\

// Calculate:\\
$ \{\boldsymbol{f}^s_k\} \  \   $    $\forall k= 1,...,K$\\
$ \{\boldsymbol{w}^s_k\} \  \   $    $\forall k= 1,...,K$

\For {$s<S$} {

\For {$i=1\  to\  K$} {
\For {$j=1\  to\  K$} {
Transmitter transmits using $\boldsymbol{f}^s_i $\\
Receiver measures using $\boldsymbol{w}^s_j$
}
}

// After initial $K^2$ measurements\\
$q\leftarrow K^2$\\
$\boldsymbol{c}^{s,q} \leftarrow \sqrt{P}x\boldsymbol{l}^{s,q}+\boldsymbol{n}^{s,q}$\\
$\boldsymbol{d}\leftarrow \argmax\limits_{\boldsymbol{j} \in {\mathcal{V}}}p(\boldsymbol{v}|\boldsymbol{c}^{s,q})$\\
$d\leftarrow$ non-zero element of $\boldsymbol{d}$\\
$\hat{k}_\textrm{t} \leftarrow \lceil \dfrac{d}{K} \rceil ,\; \hat{k}_\textrm{r} \leftarrow d - K(\hat{k}_\textrm{t} - 1)$\\

// Find a lower bound for the number of measurements required\\
$\lambda^q\leftarrow$ the mean of values in $\boldsymbol{c}^{s,q}$ corresponding to $\hat{k}_\textrm{t}$ and $\hat{k}_\textrm{r}$\\
$\hat{\alpha} \leftarrow  \frac{ \lambda^q}{\sqrt{P} C^2_{s} }$\\
$L\leftarrow\dfrac{K^2}{log_2(1 + \dfrac{|\hat{\alpha}|^2PK^{(2s-2)}}{N_0})}$

\For {$i=1\  to\  K$} {
\For {$j=1\  to\  K$} {
$q++$\\
Transmitter transmits using $\boldsymbol{f}^s_i $\\
Receiver measures using $\boldsymbol{w}^s_j$\\
Repeat lines $20$ to $22$\\
\If{$q\geq L$}{Break;}
}
\If{$q\geq L$}{Break;}
}

// Update:\\
Repeat lines $16$ to $18$\\

\While{$p(\boldsymbol{v}|\boldsymbol{c}^{s,q})< (1 - \Gamma)$\\
\and $q<q_{\textrm{max}}$}{
$q++$\\
Transmitter transmits using $\boldsymbol{f}^S_{\hat{k}_\textrm{t}}$\\
Receiver receives using $\boldsymbol{w}^S_{\hat{k}_\textrm{r}}$\\

// Update:\\
Repeat lines $16$ to $18$\\

}
}

\textbf{Output:}
$\hat{\alpha} =  \frac{x (\boldsymbol{w}^S_{\hat{k}_\textrm{r}})^H\boldsymbol{H}\boldsymbol{f}^S_{\hat{k}_\textrm{t}}}{C^2_{s} }$, $\hat{k}_\textrm{t}$, $\hat{k}_\textrm{r}$.\\
\caption{Robust adaptive multi-feedback algorithm (RAF).}
\label{Shannon}
\end{algorithm}

In each stage, the RAF algorithm starts by conducting $K^2$ initial channel measurements. The MLD enables the system to have an estimation of the AoA and AoD that can further be used to estimate the value of channel coefficient ($\alpha$). Having the estimated $\alpha$, the receiver can predict a lower bound for the required number of measurements. Up to the point of reaching the PEE threshold, the TX continues to send the pilots as explained in the optimal feedback algorithm. As the pilots are accumulated, the same process of MLD is used to achieve a better estimation of $\alpha$ which results in obtaining a more accurate lower bound. After reaching the PEE threshold, the RX feeds back the estimated AoD. At this point, the TX stops sending the pilots in the order of initial channel estimation and only sends a pilot by the estimated AoD. The RX knows the estimated AoA and utilizes the corresponding combiner to receive the pilot. Following the same process after receiving each pilot, the RX estimates the AoA and AoD and feeds back the estimated AoD. The stage terminates as soon as the required estimation precision is reached. In the final transmission of feedback bits, an extra bit will be transmitted to notify the transmitter to stop the transmission of pilot signals. The RAF algorithm is represented formally in Algorithm \ref{Shannon}.

\subsection{RAF Performance Analysis}
To analyze the performance of RAF, in this section we proceed to derive upper and lower bounds on its achievable PEE. Before doing so, we formally define PEE in Definition~\ref{def7}.

\begin{defn}\label{def7}
	\textit{Probability of estimation error (PEE):} At any stage $s$, assuming all detections in previous stages have been correct, we define PEE as
    \begin{equation} \label{ff1}
    p(EE|\boldsymbol{G}^{s,q},\boldsymbol{v})= p(\boldsymbol{v}\neq \boldsymbol{\hat{v}}),
    \end{equation}
    where by $EE$ we refer to estimation error and $p(\boldsymbol{v}\neq \boldsymbol{\hat{v}})$ indicates the probability of an event in which the estimated vector $\boldsymbol{\hat{v}}$ is not equal to the transmitted vector $\boldsymbol{v}$.
\end{defn}

From an information theoretic perspective, we are encoding a vector $v$ entailing $K^2$ bits over $q$ measurements using a generator matrix $\boldsymbol{G}^{s,q}$. On the receiving side, we are observing the signal $\boldsymbol{c}^{s,q}$ from which we estimate the transmitted symbol (i.e., vector $v$). The PEE can be written in terms of the union of possible outcomes as
\begin{equation} \label{f2}
p(\boldsymbol{v}\neq \boldsymbol{\hat{v}}) = \bigcup_{\boldsymbol{\hat{v}} \in \mathcal{V} ,\boldsymbol{v}\neq \boldsymbol{\hat{v}}} p(\boldsymbol{c}^{s,q}\rightarrow \boldsymbol{\hat{c}}^{s,q})
\end{equation}
where $\boldsymbol{\hat{c}}^{s,q}$ is the observation vector corresponding to $\boldsymbol{\hat{v}}$ and the term $p(\boldsymbol{c}^{s,q}\rightarrow \boldsymbol{\hat{c}}^{s,q})$ indicates the probability of an event in which $\boldsymbol{\hat{c}}^{s,q}$ is chosen as the outcome over $\boldsymbol{c}^{s,q}$. As we are using MLD to detect the received symbol, referencing to \cite{s1}, the pairwise probability of error estimation over the fading channel with channel coefficient of $\alpha \thicksim \mathcal{N}(0,Q)$ can be calculated as
\begin{equation} \label{f3}
p(\boldsymbol{c}^{s,q}\rightarrow \boldsymbol{\hat{c}}^{s,q}) = 0.5-\sqrt{\dfrac{\Omega^2}{8+4\Omega^2}},
\end{equation}
where $\Omega$ is given by
\begin{equation} \label{f4}
\Omega = \sqrt{\dfrac{PQC_s^4}{2N_0}}\|\boldsymbol{G}^{s,q}(\boldsymbol{v}- \boldsymbol{\hat{v}})  \|^2.
\end{equation}
Having the pairwise probability of error estimation and conditioning on the transmitted vector $\boldsymbol{v}$, PEE for the given generator matrix can be written as
\begin{align} \label{f5}
\setstretch{.9}
p(EE|\boldsymbol{G}^{s,q})&= \sum_{\boldsymbol{v}\in \mathcal{V}} p(\boldsymbol{v})p(EE|\boldsymbol{G}^{s,q},\boldsymbol{v})\nonumber \\
&=\sum_{\boldsymbol{v}\in \mathcal{V}} p(\boldsymbol{v})\bigcup_{\boldsymbol{\hat{v}} \in \mathcal{V} ,\boldsymbol{v}\neq \boldsymbol{\hat{v}}} p(\boldsymbol{c}^{s,q}\rightarrow \boldsymbol{\hat{c}}^{s,q})\nonumber \\
&=\sum_{\boldsymbol{v}\in \mathcal{V}} p(\boldsymbol{v})\bigcup_{\boldsymbol{\hat{v}} \in \mathcal{V} ,\boldsymbol{v}\neq \boldsymbol{\hat{v}}}\bigg(\frac{1}{2}-\sqrt{\dfrac{\Omega^2}{8+4\Omega^2}}\  \bigg).
\end{align}
An upper bound on (\ref{f5}) can be applied by replacing union with summation, which results in

\begin{align} \label{f8}
p(EE|\boldsymbol{G}^{s,q})
\leq \sum_{\boldsymbol{v}\in \mathcal{V}}\sum_{\substack{\boldsymbol{\hat{v}} \in \mathcal{V}\\ \boldsymbol{v}\neq \boldsymbol{\hat{v}}}} p(\boldsymbol{v})\bigg(\frac{1}{2}-\sqrt{\dfrac{\Omega^2}{8+4\Omega^2}}\  \bigg).
\end{align}
Based on (\ref{f8}), an upper bound for the PEE over all stages $1 \textrm{ to }S$ can be calculated by

\begin{align} \label{f9}
p(EE) &= 1- \prod_{s=1}^{S}(1-p(EE|\boldsymbol{G}^{s,q}))\nonumber \\
&\leq  \sum_{s=1}^{S}\sum_{\boldsymbol{v}\in \mathcal{V}}\sum_{\substack{\boldsymbol{\hat{v}} \in \mathcal{V}\\ \boldsymbol{v}\neq \boldsymbol{\hat{v}}}} p(\boldsymbol{v})\bigg(.5-\sqrt{\dfrac{\Omega^2}{8+4\Omega^2}}\  \bigg).
\end{align}
Also, assuming that all possible realizations of the set $\mathcal{V}$ have an equal probability, a lower bound on PEE can be derived as
\begin{align} \label{f10}
p(EE|\boldsymbol{G}^{s,q})
\geq \frac{1}{2}-\sqrt{\dfrac{PQC_s^4\|(\boldsymbol{v}- \boldsymbol{\hat{v}})  \|^2}{16N_0+PQC_s^4\|(\boldsymbol{v}- \boldsymbol{\hat{v}})  \|^2}}.
\end{align}

\section{Proposed Beam Tracking Algorithm}\label{Proposed beam tracking algorithm}
In this section, the proposed system model for the EKF algorithm is explained. First, the necessity for a new state evolution model is demonstrated, followed by the proposed model. Then, the observation model corresponding to the state evolution model is derived, and finally, the EKF algorithm is illustrated. The advantages of our proposed mmWave channel tracking model for vehicular communications can be summarized as:

\begin{itemize}
  \item In contrast to the previous models, which are based on the angles of arrival and departure, our proposed model is based on position, velocity and channel coefficient, resulting in a more practical approach for mmWave vehicular communications by increasing the tracking performance and lowering the number of time channel estimation is required.
  \item Based on the new model, we are able to consider factors such as the velocity of the vehicle, block duration, etc. As addressed in this section, if angles are used as state variables for the movement of vehicles, the state evolution model is no longer a linear one. Such non-linearity leads to high complexity in the calculation of Jacobians, which cannot be easily implemented in practical mmWave vehicular networks. Hence, our proposed model can avoid such problems.
  \item Our proposed linear model results in a much lower complexity in the calculation of Jacobian matrices, as it is will be shown by the derivation of closed-form expressions for Jacobians.\\
\end{itemize}

\subsection{State Evolution Model}
Previous attempts to apply the EKF algorithm on mmWave beam tracking were based on using the AoA and AoD as state variables \cite{zhang,rh_track}. The existing state evolution models are linear and assumed to evolve by a Gaussian noise with zero mean. Also, the hypothetical noise is additive that highly facilitates the processing of realistic scenarios. Unfortunately, such modeling is not realistic for most of the vehicular communication scenarios. This drawback becomes evident when considering how actually the angles evolve as shown in Fig \ref{f1}.

\begin{thm}
If a vehicle moves from the transmit angle $\phi_{j}$ to $\phi_{j+1}$, the change in the transmit angle and similarly for the receiving angle can be calculated by
\begin{align} \label{n34}
\phi_{j+1}-\phi_{j} = -\cot^{-1}(\frac{h}{\cos^2\phi_{j}(v_j +w_j)\Delta t} - \tan\phi_{j}),
\end{align}
where $v_j$ is the velocity of the vehicle at the first location and $w_j$ is a Gaussian noise.
\end{thm}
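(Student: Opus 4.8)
The plan is to work directly with the geometry in Fig.~\ref{f1}. I would set up a coordinate system with the base of the antenna tower at the origin, so the BS sits at height $h$ on the $y$-axis, and the vehicle at block $j$ is at horizontal distance $d_j$ along the ground. With the TX angle $\phi_j$ measured between the positive $x$-axis and the line from the BS to the vehicle, the basic right-triangle relation is $\tan\phi_j = h/d_j$ (up to the sign/orientation conventions the paper fixes for $\phi_j$), equivalently $d_j = h\cot\phi_j = h/\tan\phi_j$. This is the key identity: it converts the angle state into a position, and it is the step I would write down first.

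Next I would model the one-block displacement. Since the vehicle moves at speed $v_j$ plus a zero-mean Gaussian perturbation $w_j$ over a block of duration $\Delta t$, the new horizontal position is $d_{j+1} = d_j + (v_j + w_j)\Delta t$. Substituting $d_j = h/\tan\phi_j$ gives
\begin{equation}
d_{j+1} = \frac{h}{\tan\phi_j} + (v_j+w_j)\Delta t,
\end{equation}
and then applying $\tan\phi_{j+1} = h/d_{j+1}$ yields an explicit expression for $\phi_{j+1}$ in terms of $\phi_j$, $v_j$, $w_j$, $h$, $\Delta t$. The remaining task is purely algebraic: invert to get $\phi_{j+1} = \cot^{-1}(d_{j+1}/h)$ and simplify $d_{j+1}/h = 1/\tan\phi_j + (v_j+w_j)\Delta t/h$, then form the difference $\phi_{j+1}-\phi_j$.

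To land exactly on the stated form, the trick is to expose the factor $\cos^2\phi_j$ and the $\tan\phi_j$ offset. I would write $\phi_{j+1}-\phi_j = -\bigl(\phi_j - \phi_{j+1}\bigr)$ and use the cotangent-difference-type manipulation: since $\phi_j = \cot^{-1}(d_j/h)$ and $\phi_{j+1} = \cot^{-1}(d_{j+1}/h)$, I would rewrite the difference as a single inverse cotangent by absorbing $d_j/h = \cot\phi_j$ into the argument, which naturally produces the $h/(\cos^2\phi_j (v_j+w_j)\Delta t) - \tan\phi_j$ expression after multiplying numerator and denominator through by $\cos^2\phi_j$ and using $\sin\phi_j\cos\phi_j$ identities. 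Carrying the $\cos^2\phi_j$ into the denominator is exactly what turns the raw $d_j + (v_j+w_j)\Delta t$ displacement into the claimed closed form.

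The main obstacle is bookkeeping the angle conventions and signs: the paper defines $\phi_j$ as the angle from the positive $x$-axis to the line $BA$, and as the vehicle recedes $\phi_j$ decreases, so the displacement $\phi_{j+1}-\phi_j$ is negative — which is why the stated formula carries an overall minus sign and uses $\cot^{-1}$ of a quantity that grows as $(v_j+w_j)\Delta t \to 0$ (i.e.\ small displacement $\Rightarrow$ small angle change). I would double-check the sign by testing the limit $(v_j+w_j)\Delta t \to 0$, where the argument of $\cot^{-1}$ tends to $+\infty$ and the whole expression tends to $0$, consistent with $\phi_{j+1}\to\phi_j$; and the limit of large displacement, where the vehicle is far away and $\phi_{j+1}\to 0$. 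Beyond that sign/convention check, the derivation is elementary trigonometry, and the same argument applies verbatim to the receive angle $\theta_j$ with $\phi$ replaced by $\theta$.
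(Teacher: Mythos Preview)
Your route is sound but genuinely different from the paper's. The paper does not work in coordinates at all: it applies the law of sines to the triangle $ABC$ (vertex angle $\Delta\phi$ at the BS, opposite side $\Delta d$) and to the right triangle $ACD$ to obtain $\Delta d = h\sin(\Delta\phi)/(\cos\phi_j\cos\phi_{j+1})$, then expands $\cos\phi_{j+1}$ via the cosine angle--addition formula and solves the resulting expression for $\Delta\phi$. Your approach --- write $d_j$ as a trig function of $\phi_j$, add the displacement, invert, and take the difference --- is more direct and avoids the auxiliary length $T=|AC|$ entirely; what it costs you is that the step ``absorb into a single inverse cotangent'' is really the tangent/cotangent subtraction identity $\cot(A-B)=(\cot A\cot B+1)/(\cot B-\cot A)$, and you should name it rather than wave at ``$\sin\phi_j\cos\phi_j$ identities''.

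One concrete caution on the bookkeeping you flagged: with your stated convention $\tan\phi_j = h/d_j$ (equivalently $d_j/h=\cot\phi_j$), the subtraction identity gives
\[
\cot(\phi_j-\phi_{j+1})=\frac{h}{\sin^2\phi_j\,(v_j+w_j)\Delta t}+\cot\phi_j,
\]
with $\sin^2$ and $\cot$, not the $\cos^2$ and $\tan$ in the stated formula. The paper's lemma proof implicitly uses the complementary convention $\tan\phi_j=d_j/h$ (visible from $T=h/\cos\phi_{j+1}$ in its second law--of--sines step), under which your method delivers exactly $\cot(\phi_{j+1}-\phi_j)=h/(\cos^2\phi_j\,\Delta d)+\tan\phi_j$ and then the stated form up to the overall sign. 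So your plan works, but the choice of which leg is ``adjacent'' is not cosmetic here --- it determines whether $\cos^2$ or $\sin^2$ appears --- and you should fix it to match the lemma before doing the algebra.
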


\begin{proof}
Considering Fig. \ref{f1} and denoting length of the line $AC$ by $T$, for simplicity, we apply the law of sines to $\overset{\triangle}{ABC}$ and $\overset{\triangle}{ACD}$ to yield the relations below:

\begin{equation}\label{n35}
\dfrac{\Delta d}{\sin (\Delta \phi)} = \dfrac{T}{\sin(90- \phi _j)}
\end{equation}
and
\begin{equation}\label{n36}
\dfrac{T}{\sin (90)} = \dfrac{h}{\sin(90- \phi _{j+1})}.
\end{equation}
Substituting $T$ from (\ref{n36}) into (\ref{n35}), $\Delta{d}$ is derived as
\begin{equation}\label{n37}
\Delta{d}=\dfrac{\sin (\Delta \phi)h}{\operatorname{cos}(\phi _j)\operatorname{cos}(\phi _{j+1})}.
\end{equation}
Furthermore, $\operatorname{cos}(\phi _{j+1})$ can be written as
\begin{multline}\label{n38}
\operatorname{cos}(\phi _{j+1}) = \cos (\phi _{j+1} - \phi_j+\phi _j)\\ =\cos (\Delta \phi)\operatorname{cos}(\phi _j)+\sin(\Delta \phi) \sin(\phi _j)
\end{multline}
and by substituting it into (\ref{n37}), the Kinematic formula relating velocity to displacement is derived to be
\begin{multline}\label{n39}
\Delta d= (v_j +w_j)\Delta t\\ = \dfrac{\sin (\Delta\phi)h}{\operatorname{cos}(\phi _j)(\cos (\Delta \phi)\operatorname{cos}(\phi _j)+\sin(\Delta \phi) \sin(\phi _j))}.
\end{multline}
Solving (\ref{n39}) for $\Delta \phi$ directly results in (\ref{n34}).
\end{proof}

\begin{figure*}[!t] 
\normalsize
\begin{multline}
\setstretch{.9}
\frac{\partial y_m}{\partial d_m} = \sum\limits_{q=0}^{N_\textrm{t}-1} \sum\limits_{p=0}^{N_\textrm{r}-1} \dfrac{\frac{\sqrt{P}\alpha_mx}{N_\textrm{t}N_\textrm{r}}j\frac{2\pi}{\lambda}dh^2(p+q)}{\sqrt{(h^2+(d_{m-1} + v_{m-1}\Delta t)^2)^3}} \times \\ e^{\dfrac{j\frac{2\pi}{\lambda}d(b_{pq}\sqrt{h^2+(d_{m-1} + v_{m-1}\Delta t)^2} + (p+q)(d_{m-1} + v_{m-1}\Delta t)  )  }{\sqrt{h^2+(d_{m-1} + v_{m-1}\Delta t)^2}}} \label{n46}
\end{multline}
\hrulefill
\end{figure*}

Recall that the main drawback of the EKF algorithm is the complexity in its implementation. It can be seen that (\ref{n34}) is non-linear with respect to the angle. Also, the hypothetical noise that happens due to the change in velocity is non-additive. These are the two main factors affecting the complexity. Therefore, if we use the angles as state variables, the calculation of the Jacobians for EKF algorithm is of high complexity. To solve this problem, we propose to use the position, velocity, and complex channel coefficient as the state variables, which give a linear state model and hypothetical additive noise for vehicular communications. As a result, the state vector can be written as

\begin{equation} \label{n40}
\boldsymbol{x}_m = [d_m,v_m,\alpha_m^\textrm{R},\alpha_m^\textrm{I}]^T,
\end{equation}
where $d_m$ and $v_m$ denote the position and velocity of the vehicle at the $m$-th transmission block, respectively. The Gaussian coefficient is divided into a real part and an imaginary part, i.e., $\alpha_m = \alpha_m^\textrm{R} + j\alpha_m^\textrm{I}$, which helps to have the state vector as real numbers.  $\alpha_m^\textrm{R}$ and $\alpha_m^\textrm{I}$ are assumed to follow the first order Gauss-Markov model expressed by \cite{rh_track}
\begin{align} \label{n41}
\setstretch{.9}
\alpha_{m+1}^\textrm{R} &=\rho \alpha_m^\textrm{R} + \xi_m\\
\alpha_{m+1}^\textrm{I} &=\rho \alpha_m^\textrm{I} + \xi'_m,
\end{align}
where $\rho$ is the correlation coefficient, $\xi_m$, $\xi'_m \thicksim \mathcal{N}(0,\dfrac{1-\rho^2}{2})$, and  $\xi[-1]$, $\xi'_[-1] \thicksim \mathcal{N}(0,\dfrac{1}{2})$. The evolution of position and velocity are thus formulated as
\begin{align} \label{n42}
\setstretch{.9}
d_{m+1} &= d_m + v_m\Delta t + w_m\Delta t\\
v_{m+1} &= v_m + w_m,
\end{align}
with $w_m$ denoting hypothetical noise that represents the change in the speed of the vehicle. It is assumed to follow the Gaussian distribution, i.e., $w_m\thicksim \mathcal{N}(0,\sigma_w^2)$. In summary, the state evolution equation can be written as
\begin{equation} \label{n43}
\boldsymbol{x}_{m+1} =\boldsymbol{A} \boldsymbol{x}_m + \boldsymbol{u}_m,
\end{equation}

where
\begin{align} \label{n44}
\setstretch{.9}
\boldsymbol{A} =
\begin{bmatrix}
1\; \Delta t\;  0\;\;  0\\
0 \;\; 1 \;\; 0\;\; 0\\
0\;\;  0 \;\; 1\;\;  0\\
0\;\;  0 \;\; 0 \;\; 1
\end{bmatrix},
\end{align}

and $\boldsymbol{u}_m\thicksim \mathcal{N}(0,\boldsymbol{\Sigma_u})$ with

\begin{align} \label{n45}
\boldsymbol{\Sigma_u} = \mathrm{diag}(\,[\,(\Delta t\sigma_w)^2, (\sigma_w)^2, 1 - \rho^2, 1 - \rho^2\,]\,).
\end{align}

\subsection{Observation Expression}
In order to complete the model for the EKF algorithm, we need to derive the measurement function in terms of the state variables. 
By substituting the (\ref{n1},\ref{m1},\ref{n2},\ref{n3}) in (\ref{nn10}), we have

 \begin{equation} \label{n47}
\begin{split}
y_m &= \frac{\sqrt{P}\alpha_mx}{N_\textrm{t}N_\textrm{r}}(\sum\limits_{p=0}^{N_\textrm{r}-1}e^{-j\frac{2\pi}{\lambda}dp(\cos\theta_m-\cos\overline{\theta}_m)})\times\\&\,\,\,\,\,\,\,(\sum\limits_{q=0}^{N_\textrm{t}-1}e^{j\frac{2\pi}{\lambda}dq(\cos\phi_m-\cos\overline{\phi}_m)})+ n_m\\
&=\frac{\sqrt{P}\alpha_m}{N_\textrm{t}N_\textrm{r}} \sum\limits_{q=0}^{N_\textrm{t}-1} \sum\limits_{p=0}^{N_\textrm{r}-1}e^{j\frac{2\pi}{\lambda}d(-p\cos\theta_m+q\cos\phi_m+b_{pq})}+ n_m,
\end{split}
\end{equation}

 where
\begin{equation} \label{n48}
b_{pq} = p\cos\overline{\theta}_m- q\cos\overline{\phi}_m.
\end{equation}
In Fig. \ref{f1}, the AoA and AoD of the system can be measured as

\begin{equation} \label{n49}
\theta_m = \operatorname{atan2}(\dfrac{-h}{-d_m}) = \operatorname{atan2}(\dfrac{-h}{-(d_{m-1} + v_{m-1}\Delta t)})\\
\end{equation}
\begin{equation} \label{n50}
\phi_m = \operatorname{atan2}(\dfrac{h}{d_m}) = \operatorname{atan2}(\dfrac{h}{d_{m-1} + v_{m-1}\Delta t}).
\end{equation}
where $\operatorname{atan2}$ is the four-quadrant inverse tangent. The cosine of the angles are calculated as
\footnote{$\operatorname{cos}(\operatorname{atan2}(\dfrac{y}{x})) = \dfrac{y}{\sqrt{x^2+y^2}}$.}
\begin{align}\label{n51}
\operatorname{cos}(\theta_m)&= \dfrac{-(d_{m-1} + v_{m-1}\Delta t)}{\sqrt{h^2+(d_{m-1} + v_{m-1}\Delta t)^2}}\\
\operatorname{cos}(\phi_m)&= \dfrac{(d_{m-1} + v_{m-1}\Delta t)}{\sqrt{h^2+(d_{m-1} + v_{m-1}\Delta t)^2}} \label{n52}
\end{align}
Substituting equations (\ref{n51}), (\ref{n52}) into (\ref{n47}), we have the observation equation in terms of the state variables as

 \begin{equation} \label{n53}
\begin{split}
y_m &=\frac{\sqrt{P}\alpha_mx}{N_\textrm{t}N_\textrm{r}}\times \\ &\sum\limits_{q=0}^{N_\textrm{t}-1} \sum\limits_{p=0}^{N_\textrm{r}-1}e^{j\frac{2\pi}{\lambda}d(\dfrac{(p+q)(d_{m-1} + v_{m-1}\Delta t)}{\sqrt{h^2+(d_{m-1} + v_{m-1}\Delta t)^2}}+b_{pq})}\\ &+ n_m = g(\boldsymbol{x}_m)+n_m.
\end{split}
\end{equation}

\subsection{EKF-Based Beam Tracking}
In this subsection, we will present how to use the EKF algorithm to track the vehicle. As the vehicle moves, the state vector of the process evolves. Our aim is to match the $\overline{\theta}_m$ and $\overline{\phi}_m$ to $\theta_m$ and $\phi_m$, respectively.

EKF recursion procedure \cite{ekf} is described in the Fig. \ref{recursion}. For estimating the state at the ($m+1$)-th transmission block, the algorithm starts by assigning the predicted values of ($m+1$)-th state estimate and its covariance to the values of $m$-th transmission block. Then, the Kalman gain is calculated based on the assigned values. Finally, obtaining a new observation and using the values calculated for Kalman gain, the ($m+1$)-th state and its covariance are updated.

\begin{figure}[t]
\centering
\includegraphics[scale=.35]{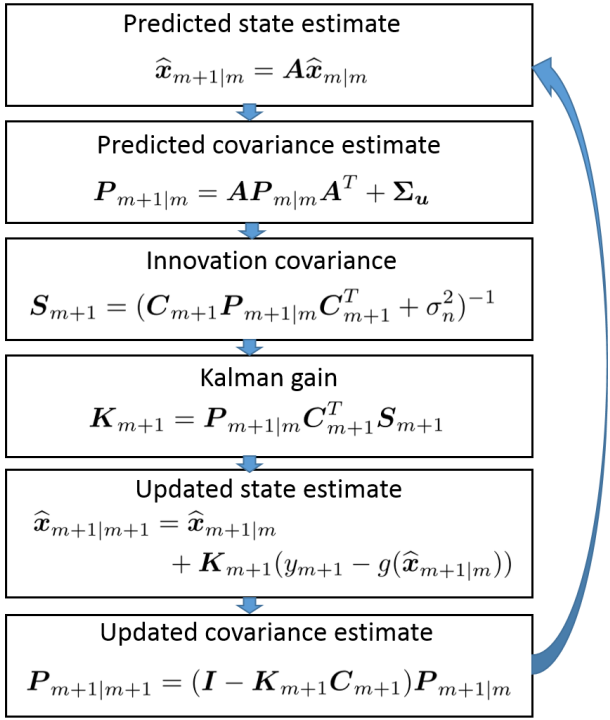}
\centering
\caption{Flowchart of the EKF recursion.}
\label{recursion}
\end{figure}
In Fig. \ref{recursion}, $\boldsymbol{C}_{m+1}$ is the observation transition matrix defined by the following Jacobian matrix
\begin{equation} \label{n60}
\boldsymbol{C}_{m+1} = \left.\frac{\partial g}{\partial\boldsymbol{x}}\right|_{\widehat{\boldsymbol{x}}_{m+1|m}}.
\end{equation}
The partial derivative with respect to position is given in (\ref{n46}) and the partial derivative with respect to velocity is calculated by
\begin{equation}\label{n61}
\frac{\partial y_m}{\partial v_m} = \frac{\partial y_m}{\partial d_m} \times \Delta t.
\end{equation}
For the channel coefficient, calculation of the partial derivative is straightforward, which can be done by simply excluding the noise term and channel coefficient in (\ref{n46}).
Note that, in order to deal with real numbers in implementation of the EKF algorithm, $y_m$ and $\boldsymbol{C}_m$ are substituted by $\widetilde{y}_m = [y_m^\textrm{R}, y_m^\textrm{I}]^T$ and $\widetilde{\boldsymbol{C}}_m = [\boldsymbol{C}_m^\textrm{R}, \boldsymbol{C}_m^\textrm{I}]^T$ in the equations used in the Fig. \ref{recursion}.

\section{Numerical Results }\label{aaa}
\subsection{Simulation Configurations}

In simulations, the number of antennas at both the TX and RX is set to be $64$, with $\lambda/2$ spacing, the channel coefficient is assumed to follow a Gaussian distribution with zero mean and unit covariance, i.e., $\mathcal{C}\mathcal{N}(0,1)$, and the initial AoD and AoA are set to $-135$ and $45$ degrees. Furthermore, we set the value of $\rho$ to $0.995$, $\Delta t$ to  $0.001$ s, the initial speed of vehicle to $60$ km/h, and the variation in speed $\sigma_w $ to $1.4 $ m/s. Also, in order to compare our proposed channel estimation approach to previously established methods in \cite{RH} and \cite{RACE}, we set the value of $K$ to two, the maximum number of measurements $q_{\textrm{max}}$ to $264$, and the target PEE to $ 10^{-2}$.

\subsection{Achievable PEEs}
Achieving the target PEE is essential for guaranteeing a reliable transmission between the TX and RX. Fig. \ref{pee} represents the PEE for different values of SNR. We have compared our results with the current state of art algorithm termed RACE, proposed in \cite{RACE}, and the fixed-rate algorithm proposed in \cite{RH}. The fixed-rate algorithm only considers channel estimation, without any constraints on PEE, and expectedly, the algorithm is not able to maintain the PEE below the predefined threshold of $10^{-2}$. As can be seen in the figure, the RACE and RAF algorithms both result in a comparable PEE performance achieving the desired PEE with a negligible difference. Note that the PEE performance of both RAF and RACE algorithms are over the PEE threshold in low SNRs. The reason behind this behavior is the existing probability of outage in the simulated system.

To verify our analysis, we present the analytical and numerical results in Fig. \ref{Analytical}. As can be seen in the figure, numerical results lie within the analytical lower and upper bounds, which verifies the credibility of our simulations. Note that the analytical lower bound is not very tight compared with the numerical results due to the assumption in the calculation of the lower bound that $p(\boldsymbol{v})$ is uniformly distributed.

\begin{figure}[t!]
\centering
\includegraphics[width=3.548in]{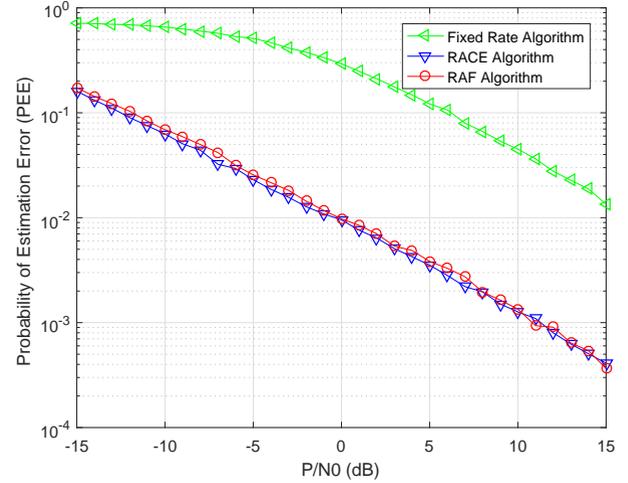}
\caption{PEE performance of the RAF algorithm in comparison with the algorithms in \cite{RACE} and \cite{RH}.}
\label{pee}
\end{figure}

\begin{figure}[t!]
\centering
\includegraphics[width=3.548in]{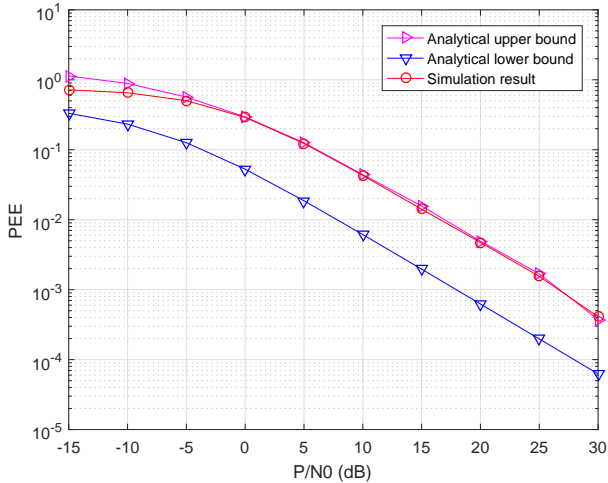}
\caption{Comparison of the numerical and analytical results for PEE.}
\label{Analytical}
\end{figure}

\subsection{Channel Estimation Time}\label{Overall duration of channel estimation}
The primary metric for measuring the effectiveness of channel estimation algorithms is the overall time required to estimate the channel. In this subsection, we consider the overall time required for the channel estimation and compare the RAF algorithm with the previous approaches. Note that having a lower channel estimation period is essential to maximize the time assigned for data transmission.

Transmitting a pilot or a feedback bit both require only one time slot to be conducted. Therefore, the overall time needed for the channel estimation can be calculated by the addition of time slots assigned for pilot and feedback bit transmissions. We study these two factors in the following.

Recall that algorithms are required to ensure a predetermined PEE in addition to estimating the channel. For this reason, we do not consider the fixed-rate algorithm, as it provides no assurance for the required PEE. Fig. \ref{feedback} exhibits the overhead feedback performance of the RAF algorithm compared with the RACE algorithm. The optimal feedback bits required is also shown in the figure. Achieving a lower number of feedback bits is desirable as it results in more time for data transmission. It can be seen that the RAF algorithm requires a significantly small number of feedback bits compared with the RACE algorithm. The average feedback bits needed is almost as low as the optimal number. The difference between the algorithms becomes apparent, particularly in the low SNR regime. Therefore, the RAF algorithm is considered to be a viable option for dedicating time for data transmission while ensuring that error probability is in an acceptable range.

The overall time required for channel estimation based on the RAF and RACE algorithms is shown in Fig. \ref{s}. This figure illustrates the superior performance of the RAF algorithm. On average, for SNRs ranging from -15 dB to 15 dB, the performance is improved by $14\%$. At low SNR, the difference is more significant. For instance, in the SNR of -15dB, the overall time required for the channel estimation is reduced by $30\%$ using the RAF algorithm. Therefore, the RAF algorithm is able to increase the time assigned for the data transmission significantly while maintaining the PEE below the predefined threshold.

\begin{figure}[t!]
\centering
\includegraphics[width=3.548in]{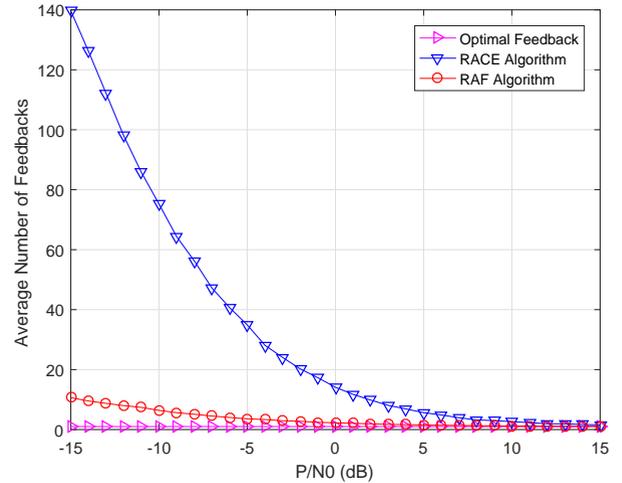}
\caption{Feedback performance of the RAF algorithm compared to \cite{RACE} and the optimal number of feedback bits.}
\label{feedback}
\end{figure}

\begin{figure}[t!]
\centering
\includegraphics[width=3.548in]{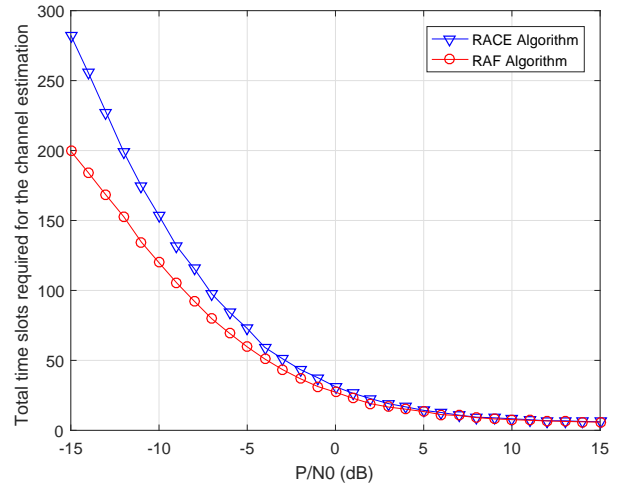}
\caption{Comparison of the time required for the channel estimation in each stage.}
\label{s}
\end{figure}

\subsection{Performance of the Proposed Beam Tracking Algorithm}

The main factors affecting the tracking performance are the received SNR and the number of antennas used at the TX and RX. The mean square error (MSE) performance of the EKF tracking algorithm for various SNR settings over $3000$ experiments is shown in Fig. \ref{s2}. The performance is shown for the AoD, and similarly, it can also be shown for the AoA as they are both functions of the position. The valid tracking threshold is chosen to be $\sqrt{E[|\phi_m-\overline{\phi}_m|^2 ]} = BW/2$ \cite{rh_track}, where $BW$ denotes the half-power beamwidth of the antenna array. Therefore, we define that the tracking is lost if the MSE is larger than the threshold. Such a threshold is indicated by a horizontal line in this figure. Note that, half-power beamwidth is a function of the beam direction, it is in its maximum for the end-fire direction and minimum for the broadside direction\cite{antenna}. Logically, we choose the broadside direction as our threshold so that our results stay valid in generic scenarios. The broadside direction of the beam happens when the vehicle is exactly below the antenna array, and its value can be estimated as $\lambda /(dN)$.

\subsubsection{SNR Performance}

Fig. \ref{s2} presents the performance of our proposed tracking model for the SNRs of -5, 0, and 5 dB. As can in the figure, the tracking performance crosses the threshold in 24, 31, and 38 transmission blocks for the SNRs of minus five, zero and five, respectively. Note that each transmission block corresponds to 1 ms, and the tracking time can be calculated accordingly. The most imminent trend that can be observed is that the valid duration of beam tracking becomes larger as the SNR value increases. Therefore, expectedly, we are able to track the vehicle for a more extended period of time.

The most relevant prior work to our scheme is the one proposed in \cite{rh_track}. The authors used AoA, AoD, and the channel coefficient as state variables. It was assumed that the angles evolve using a hypothetical noise with zero mean and variance of $(\dfrac{.5}{180}\pi)^2$, which is unable to characterize how a vehicle moves in the real environment. Moreover, to be consistent with the results in \cite{rh_track}, the number of antennas is set to 16. The results of the comparison are shown in Fig. \ref{s4}.

As can be seen in Fig. \ref{s4}, our proposed model increases the valid tracking duration from 85 transmission blocks to 105 transmission blocks. The reason for the improvement in tracking performance is that our proposed model considers the dynamics of the system, such as the velocity of the vehicle, block duration, etc. Moreover, as addressed in this paper, if angles are used as state variables for the movement of the vehicle, the state evolution model is no longer linear. Such non-linearity leads to high complexity in the calculation of Jacobians, which cannot be easily implemented in practical mmWave vehicular networks.

\subsubsection{Effect of Number of Antennas Used}

Another critical factor that was not considered in prior works is the number of antennas used at the transceivers. This number was set to 16, irrespective of consequences that larger antenna arrays may cause. In practice, mmWave communication systems are more likely to have large antenna arrays to compensate for the path loss that is severe with shorter wavelengths. Table \ref{t2} lists the valid tracking durations of the previous work compared to our proposed model for various numbers of antennas. It can be seen that increasing the number of antennas significantly shortens the valid duration of the beam tracking. For instance, once equipping the transceivers with 64 antennas, the previous model can only track the user on average for four successive transmission blocks, whereas our proposed model can extend the tracking duration on average to $31$ transmission blocks. To sum up, considering the dynamics of vehicular communications, our proposed model is able to considerably improve the tracking period in comparison with the existing methods.

\begin{table}[h!]
\caption{Comparison of the valid tracking duration (in terms of transmission blocks) for different number of antennas at SNR $=0 $ dB.}
\centering
\begin{tabular}{|>{\centering\arraybackslash}m{1.8cm} || >{\centering\arraybackslash}m{1.1cm} | >{\centering\arraybackslash}m{1.1cm} | >{\centering\arraybackslash}m{1.1cm} | >{\centering\arraybackslash}m{1.1cm} |}
 \hline
  $N$ &  16 & 32 & 64 & 128\\
 \hline
 \hline Prior model in \cite{rh_track} & 85 &  19 & 4 &  3  \\
 \hline Our proposed model &  105 &  62 &  31 &  8\\
 \hline Improvement &  20 & 43 &  27 &  15\\

 \hline
\end{tabular}
\label{t2}
\end{table}

\begin{figure}[t!]
\centering
\includegraphics[width=3.548in,height=2.67in]{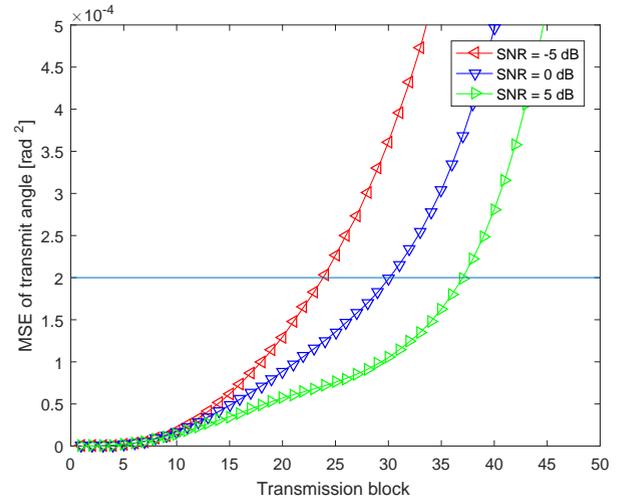}
\caption{Effect of SNR on tracking performance of the EKF algorithm.}
\label{s2}
\end{figure}

\begin{figure}[t!]
\centering
\includegraphics[width=3.548in,height=2.67in]{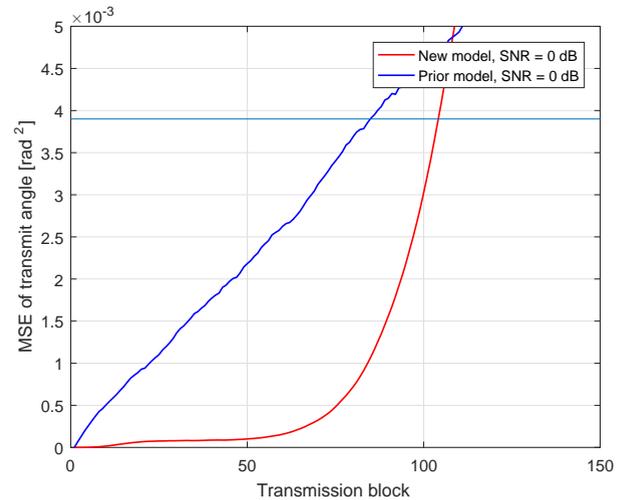}
\caption{Comparison of the proposed model for vehicular communications to the prior work presented in \cite{rh_track}.}
\label{s4}
\end{figure}

\subsection{Effect of Channel Estimation on Beam Tracking}

After considering the performance of the RAF channel estimation algorithm and the proposed model for beam tracking, we investigate the impact of the channel estimation error on the beam tracking performance of the vehicle. The results in Table \ref{t3} demonstrates how having an inaccurate channel estimation can lead to a decrease in the valid duration of beam tracking. For this experiment, the number of antennas is set to $16$, and the channel estimation errors are chosen based on the half-power, calculated as $BW = \dfrac{\lambda}{dN}$. The maximum possible error ($BW/2$) is divided into four equal divisions, and the result of 3000 runs of each is illustrated. It is evident from the table that a larger estimation error directly deteriorates the performance of the beam tracking. Therefore, having a robust algorithm such as RAF is necessary to ensure that the probability of estimation error is below an acceptable threshold according to the system requirements.

\begin{table}[th]
\caption{Impact of channel estimation error on beam tracking performance at SNR $=0$ dB (in terms of transmission blocks).}
\centering
\begin{tabular}{|>{\centering\arraybackslash}m{2.2cm} || >{\centering\arraybackslash}m{.7cm} | >{\centering\arraybackslash}m{.7cm} | >{\centering\arraybackslash}m{.7cm} | >{\centering\arraybackslash}m{.7cm} |>{\centering\arraybackslash}m{.7cm}|}
 \hline
  Channel estimation error& 0& $\dfrac{BW}{8}$  & $\dfrac{BW}{4}$  & $\dfrac{3BW}{8}$ &$\dfrac{BW}{2}$  \\
 \hline Valid tracking duration & 105 &  93 & 85 &  72&64  \\

 \hline
\end{tabular}
\label{t3}
\end{table}

\section{Conclusion }\label{conclusion}
In this paper, we proposed a novel channel estimation and beam tracking algorithm suitable for mmWave communications in V2I communications. The proposed channel estimation algorithm, dubbed RAF, was shown to be capable of significantly reducing the required channel estimation time by $14\%$ and the feedback overhead by $75.5\%$ on average, in comparison with the existing algorithms. We have also re-investigated the implementation of the EKF beam tracking for mmWave vehicular communications. Accordingly, new state evolution and observation models have been proposed considering the vehicle's position and velocity as well as channel coefficient. These models are shown to be more practical for vehicular to infrastructure communications.
\bibliographystyle{IEEEtran}
\bibliography{Journal_CS_T}

\EOD
\end{document}